\newcommand{\Rmnum}[1]{\expandafter\@slowromancap\romannumeral #1@}
\newtheorem{theorem}{Theorem}[]
\begin{document}

\setulcolor{black}
\setstcolor{red}
\setstcolor{black}

\title{ Epidemic spreading in wireless sensor networks with node sleep scheduling}

\author{Yanqing Wu,~Cunlai Pu,~Gongxuan Zhang,~\IEEEmembership{Senior Member,~IEEE},~Lunbo Li,~Yongxiang Xia,~\IEEEmembership{Senior Member,~IEEE}, and ~Chengyi Xia,~\IEEEmembership{Senior Member,~IEEE}
\thanks{Yanqing Wu, Cunlai Pu, Gongxuan Zhang, and Lunbo Li are with the School of Computer Science and Engineering, Nanjing University of Science and Technology, Nanjing 210094, China (e-mail: wuyanqing, pucunlai, gongxuan, lunboli@njust.edu.cn). (Corresponding author: Cunlai Pu)}
\thanks{Yongxiang Xia is with the School of Communication Engineering, Hangzhou dianzi University, Hangzhou 310018, China (e-mail: xiayx@hdu.edu.cn).}
\thanks{Chengyi Xia is with the School of Control Science and Engineering, Tiangong University, Tianjin 300387, China (e-mail: xialooking@163.com).}
}

\maketitle

\begin{abstract}

Wireless Sensor Networks (WSNs) have become widely used in various fields like environmental monitoring, smart agriculture, and health care. However, their extensive usage also introduces significant vulnerabilities to cyber viruses. Addressing this security issue in WSNs is very challenging due to their inherent limitations in energy and bandwidth to implement real-time security measures. To tackle the virus issue, it is crucial to first understand how it spreads in WSNs. In this brief, we propose a novel epidemic spreading model for WSNs, integrating the susceptible-infected-susceptible (SIS) epidemic spreading model and node probabilistic sleep scheduling—a critical mechanism for optimizing energy efficiency. Using the microscopic Markov chain (MMC) method, we derive the spreading equations and epidemic threshold of our model. We conduct numerical simulations to validate the theoretical results and investigate the impact of key factors on epidemic spreading in WSNs. Notably, we discover that the epidemic threshold is directly proportional to the ratio of node sleeping and node activation probabilities.
 
\end{abstract}

\begin{IEEEkeywords}
Epidemic spreading, epidemic threshold, wireless sensor networks, node sleep scheduling.
\end{IEEEkeywords}

\IEEEpeerreviewmaketitle

\section{Introduction}
\IEEEPARstart{W}{ireless} Sensor Networks (WSNs) are networks consisting of small, low-power devices called wireless sensors. These sensors communicate wirelessly using technologies like Wi-Fi, Bluetooth, and Zigbee \cite{bonivento2007system,achar2022dynamics}. They collect and transmit data from the surrounding environment to a central node or base station. WSNs are widely used in domains such as environmental monitoring, industrial automation, healthcare, and surveillance for monitoring, analysis, and decision-making purposes   \cite{akyildiz2010wireless}. 

The increasing deployment of WSNs has brought about a great concern for their security  \cite{majid2022applications}. One of the prominent threats emerging in WSNs is the spread of cyber viruses and worms \cite{nwokoye2022epidemic}. These malicious entities are capable of replicating and spreading throughout sensor nodes, posing a significant risk to the confidentiality and reliability of the collected data as well as the efficiency and functionality of WSNs. Real-world instances have demonstrated the detrimental impact caused by the infection of sensor nodes \cite{bluetooth,hypponen2006malware,szor2005art}. Therefore, it becomes imperative to mitigate the spread of viruses in WSNs. However, before devising effective control mechanisms, it is crucial to understand the mechanisms underlying epidemic spreading in WSNs through comprehensive modeling and analysis.


 

 The field of network science has dedicated its efforts to understanding  epidemic spreading processes across various complex networks, such as social networks, computer networks, and biological networks \cite{fan2022epidemics,wang2020epidemic,wang2015coupled}. Numerous models have been developed by researchers to study epidemic spreading \cite{zhu2019analysis, zhang2020suppressing, wu2021traffic,ogura2020synergistic}. These models categorize nodes into distinct compartments, including susceptible, infected, and recovered states. The susceptible-infected-susceptible (SIS) and susceptible-infected-recovered (SIR) models are widely recognized as the most representative ones. In recent years, some of   these models have been applied to explain epidemic spreading in WSNs  \cite{guillen2020mathematical,liu2020novel}. For example, Wang et al.   \cite{xiaoming2009improved}   derived an iSIR model that incorporates the energy consumption of nodes for propagating worms in WSNs. Feng et al. \cite{feng2015modeling}  proposed an improved SIRS model, considering the WSN topology as uniformly distributed grid networks. Mishra et al. \cite{mishra2013mathematical}  introduced the SEIRS-V model, which additionally accounts for the exposed and vaccinated statuses of nodes to better describe worm spread dynamics in WSNs. Furthermore, Shen et al. \cite{shen2019hsird} extended the SIR model by including the dead state of nodes, indicating energy depletion in WSNs.


In WSNs, the sleep-active mode of nodes plays a crucial role in conserving energy and prolonging the network's lifetime. During the sleep state, nodes reduce their power consumption by deactivating certain components or entering a low-power mode, resulting in minimal energy usage. Conversely, during the active state, nodes are fully operational, performing tasks such as sensing, processing, and communication. This mode has a significant impact on the process of epidemic spreading within the network. For instance, when nodes are in sleep mode, they have no communication with other nodes, rendering them unable to be infected or infect others. Several works have considered node sleep scheduling in the modeling of epidemic spreading.
One example is the work by Tang et al. \cite{tang2011modified}. They proposed a modified SI epidemic spreading model in WSNs, where infected nodes can be cured during the sleep phase. Building upon this model, Tang et al.  \cite{tang2011epidemic}  further introduced two adaptive network protection schemes aimed at securing WSNs against virus attacks. In another study by Jiang et al. \cite{jiang2020virus},  the SIR virus propagation process in WSNs was examined, taking into account the influence of the duty cycle of sleep/listening.

 So far how the sleep scheduling \textcolor{black}{affects} epidemic spreading in WSNs has not been fully addressed. Furthermore, the SIS spreading dynamics in WSNs should \textcolor{black}{be} investigated, since the nodes could be infected again after recovery in the context of WSNs. The contributions of our work are summarized as follows.
\begin{itemize}
	\item {We propose a new \textcolor{black}{epidemic} spreading model in WSNs, which integrates the SIS epidemic spreading and the node  \textcolor{black}{sleep scheduling}.  Our model serves as an extension of previous models and offers valuable insights into epidemic spreading in WSNs.}
	\item  {We establish dynamic equations and drive the epidemic threshold of our spreading model by using the microscopic Markov chain (MMC) method.  The theoretical results reveal  that the epidemic threshold is determined by  the node recovery \textcolor{black}{rate}, the largest real eigenvalue of the adjacency matrix, and the node \textcolor{black}{sleep scheduling}. The theoretical findings are further confirmed by numerical simulations.  }
\end{itemize}
\par The remainder of this brief is organized as follows. In Section \Rmnum {2}, we provide a detailed description of the proposed spreading model. Section \Rmnum {3} presents the dynamic equations, equilibrium state, and epidemic threshold of the model. In Section \Rmnum {4}, we present the numerical results to explore the influence of key factors on the epidemic spreading. Finally, Section \Rmnum {5} concludes our work.

\section{Model description}
The underlying WSN consists of $N$ sensor nodes that are connected to their neighboring nodes using wireless communication techniques. In this  WSN, all sensor nodes are assumed to be stationary. If all nodes are active, i.e., there is no sleep scheduling, the network topology is fixed.  The adjacency matrix of this fixed network topology is denoted by  $\mathcal{A}=\{a_{ij}\}_{N \times N} $, where $a_{ij}=1$  indicates a connection between nodes $i$ and $j$, while  $a_{ij}=0$ signifies no connection between them.
\textcolor{black}{However, since we consider the node sleep scheduling,   the underlying network topology will be changing with time,  and  the adjacency matrix at time $t$ is denoted as $\mathcal{A}(t) = \{a_{ij}(t)\}_{N \times N}$. }


\subsection{Node sleep scheduling}
In a WSN, sensor nodes are typically resource-constrained, with limited battery power being one of the most critical constraints. Node sleep scheduling aims to mitigate this limitation by periodically putting sensor nodes into a low-power sleep mode, where they consume minimal energy. The concept behind node sleep scheduling is to dynamically activate and deactivate nodes based on their task requirements, network conditions, or specific application needs. When a sensor node is scheduled to sleep, it turns off most of its functionalities, including the radio transceiver, to minimize energy consumption. By reducing the active time of nodes, sleep scheduling effectively reduces energy dissipation and prolongs the network's operational lifetime.

In the sleep scheduling for  WSNs, at each time step, every node is  in one of the two states: active (A) and inactive or sleep (U). During each time step, all nodes independently determine their states for the next time step. A node transitions from the active state A to the sleep state  U with a probability denoted as $u$, while a node transitions from the sleep state U to the active state A with a probability denoted as $v$. Let $N_A$ and $N_U$ represent the number of nodes in the active and sleep states, respectively. Then, the node sleep scheduling can be formulated as follows:
\begin{equation}
\begin{cases}
 N_A(t+1)   = N_A(t) - u N_A(t)+v N_U(t)  , \\
 N_U(t+1)   = N_U(t) +u N_A(t) - v N_U(t)  . 
\end{cases}
\label{eq_schedule}
\end{equation}

When this scheduling  reaches the dynamic equilibrium, we have 
\begin{equation}
\begin{cases}
 N_A = \frac {v}{u+v}N  , \\
 N_U  = \frac {u}{u+v}N,    
\end{cases}
\label{eq_NA}
\end{equation}
which means the ratio of sleep  and active nodes is fixed to be   $u/v$.

\subsection{ Epidemic spreading model}
The classical SIS model is utilized  to analyze the dynamics of infection states.  In this model, each node is in one of the two states: susceptible (S) and infected (I). The infected nodes have the ability to transmit the infection to their susceptible neighbors or transition back to a susceptible state through recovery.

\begin{figure}[!]
  \centering
  \includegraphics[width=0.35\textwidth]{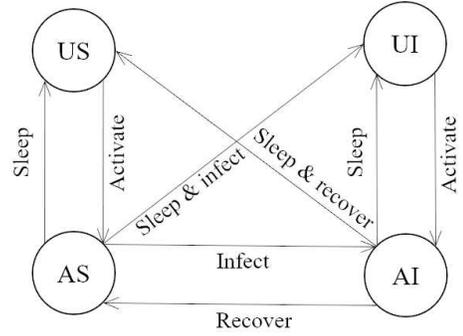}
   \vspace{-0.7em}
  \caption{ The state transition diagram of the proposed spreading  model, where US, UI, AS and AI represent  the sensor nodes are in inactive susceptible state, inactive infected state, active susceptible state and active infected state, respectively. Arrow lines indicate the transition directions between different states.}
  \label{fig_trans}
\end{figure}

Our model combines the SIS model with node sleep scheduling, resulting in four potential states: 1) US (inactive and susceptible); 2) AS (active and susceptible); 3) AI (active and infected); 4) UI (inactive and infected). It is important to note that in our model, an infected node in sleep mode cannot transmit the infection or recover from it due to a lack of energy. Additionally, we assume that at a time step,  an active susceptible node becomes infected with a probability $\beta$ when it comes into contact with an active infected neighbor, and an active infected node transitions back to a susceptible state with a probability $\gamma$. The state transition diagram of our model is depicted in Fig.~\ref{fig_trans}.

\section{Analytical  results}
In this section, we present the theoretical analysis of our proposed model. Firstly, we derive the dynamic equations governing the behavior of the model. Next, we investigate the equilibrium state of the epidemic spreading. Finally, we derive the epidemic threshold of our model.

\subsection{ Epidemic dynamics}
Let $P_i ^{US}(t)$, $P_i ^{AS}(t)$, $P_i ^{UI}(t)$ and $P_i ^{AI}(t)$ denote the probabilities that node $i$ is respectively in states US, AS, UI and AI at time step $t$. Assuming that $q_i(t)$ represents the probability that node $i$ will not be infected by any of its neighbors at time step $t$. Then,  we have
\begin{equation}
q_i(t)     = \prod_j(1 - a_{ji}(t)P_j^{AI}(t) \beta ).
\label{eq_notby}
\end{equation}

Based on the MMC method,   the dynamic equations of our model are  expressed as
\begin{equation}
\begin{cases}
\begin{aligned}
P_i^{US}(t+1)  = &P_i^{US}(t) (1 - v)  +  P_i^{AI}(t) \gamma u \\ &+ P_i^{AS}(t) q_i(t) u , 
\end{aligned} \\
\begin{aligned}
P_i^{AS}(t+1)  = &P_i^{US}(t) v + P_i^{AI}(t) \gamma (1- u) \\&+P_i^{AS}(t) q_i(t) (1-u) ,  
\end{aligned} \\
\begin{aligned}
P_i^{UI}(t+1)   = & P_i^{UI}(t) (1-v) + P_i^{AI}(t)(1- \gamma) u  \\ &+P_i^{AS}(t) (1-q_i(t)) u ,
\end{aligned} \\
\begin{aligned}
P_i^{AI}(t+1)   =& P_i^{UI}(t) v + P_i^{AI}(t)(1-  \gamma) (1- u) \\ &+P_i^{AS}(t) (1- q_i(t)) (1-u) .
\end{aligned} \\
\end{cases}
\label{eq_dynamic}
\end{equation}

 The fractions of nodes in US, AS, UI and AI states in the network at time $t$ can be respectively calculated as follows:
\begin{equation}
\begin{cases}
\rho_{US}(t)=(\sum_i^N P_i^{US}(t))/N ,  \\
\rho_{AS}(t)=(\sum_i^N P_i^{AS}(t))/N, \\
\rho_{UI}(t)=(\sum_i^N P_i^{UI}(t))/N , \\
\rho_{AI}(t)=(\sum_i^N P_i^{AI}(t))/N . \\
\end{cases}
\label{eq_frac}
\end{equation}

\subsection{Equilibrium state analysis}
As $t\to\infty$, the number of nodes in each state converges to near-constant values. Assuming that
\begin{equation}
\begin{cases}
P_i^{US}(t+1) =  P_i^{US}(t) = P_i^{US} ,  \\
P_i^{AS}(t+1) = P_i^{AS}(t) = P_i^{AS} , \\
P_i^{UI}(t+1) = P_i^{UI}(t) = P_i^{UI} , \\
P_i^{AI}(t+1) =  P_i^{AI}(t) = P_i^{AI}. \\
\end{cases}
\end{equation}
Then, Eq.~\eqref{eq_dynamic} can be written as
\begin{equation}
\begin{cases}
P_i^{US} =  P_i^{US}(1 - v)  +  P_i^{AI}\gamma u  +  P_i^{AS} q_i u , \\
P_i^{AS}  = P_i^{US} v + P_i^{AI} \gamma (1- u) +P_i^{AS} q_i (1-u) ,  \\
\begin{aligned}
P_i^{UI}  =  &P_i^{UI}(1-v) + P_i^{AI}(1- \gamma) u  \\ &+P_i^{AS} (1-q_i) u ,
\end{aligned} \\
\begin{aligned}
P_i^{AI}  =& P_i^{UI} v + P_i^{AI}(1-  \gamma) (1- u) \\ &+P_i^{AS} (1- q_i) (1-u) .
\end{aligned} \\
\end{cases}
\label{eq_dynamic2}
\end{equation}

 In our model, when the spreading process reaches equilibrium, the system can enter either of the two states: epidemic vanishment or epidemic persistence.
\subsubsection{Epidemic vanishment} $P_i^{AI} =P_i^{UI}=0$. In this state, the epidemic eventually dies out, and there are no infected nodes left in the network. Also, because $P_i^{AI} +P_i^{UI}  + P_i^{AS}  + P_i^{US}  = 1$, we have $ P_i^{US}  + P_i^{AS}  = 1$, where
\begin{equation}
\begin{cases}
P_i^{US} =  u/(u+v) , \\
P_i^{AS}  = v/(u+v)   .  \\
\end{cases}
\label{eq_vanish}
\end{equation}
We can observe that Eq. (8) is consistent with Eq.~\eqref{eq_NA}, i.e., after a finite period of time, the spreading process ends and the network runs with only active-sleep duty cycling.

\subsubsection{Epidemic persistence}   $P_i^{AI}>0$ and $P_i^{UI}>0$. In this state, the epidemic persists in the network, and the number of nodes in each state stabilizes. By solving Eq.~\eqref{eq_dynamic2}, we obtain 
\begin{equation}
\begin{cases}
P_i^{AI}  =  \frac{v}{ u+v }  \frac{1-q_i}{1-q_i+\gamma }  ,  \\
P_i^{UI}  = \frac{u}{v}  P_i^{AI} ,  \\
P_i^{AS} = \frac{v}{ u+v }  \frac{\gamma}{1-q_i+\gamma }  ,  \\
P_i^{US}  =  \frac{u}{v}   P_i^{AS} . \\
\end{cases}
\label{eq_persist2}
\end{equation}
 Eq.~\eqref{eq_persist2} is a self-consistent equation with \textcolor{black}{a trivial solution of $P_i^{AI}=P_i^{UI}=0$, where $q_i=1$. According to Eq.~\eqref{eq_persist2}, we further have  }
\begin{equation}
\begin{cases}
P_i^{US}+P_i^{UI} =  u/(u+v) , \\
P_i^{AS}+P_i^{AI}  = v/(u+v)  , \\
\end{cases}
\label{eq_persist}
\end{equation}

\subsection{Epidemic threshold}
Near the critical point of infection probability $\beta$, where the epidemic transitions from vanishing to persisting, we can assume that   $P_i^{AI}(t)=\epsilon_i \ll1$, then, $P_i^{UI}(t)=(u/v)\epsilon_i $. By substituting these assumptions into the last equation of Eq.~\eqref{eq_dynamic2}, we obtain the following expression, 
\begin{equation}
\gamma \epsilon_i =  (1- q_i)  (\frac{v}{ u+v } - \epsilon_i) .
\label{eq_lasteq2}
\end{equation}

Approximating Eq.~\eqref{eq_notby}, we can obtain the probabilities for nodes not being infected by  their neighbors as 
\begin{equation}
q_i(t)   \approx 1- \beta \sum_j a_{ji}(t)\epsilon_j .  
\label{eq_notbyapprox}
\end{equation}

Substituting Eq.~\eqref{eq_notbyapprox} into Eq.~\eqref{eq_lasteq2}, we get
\begin{equation}
\gamma \epsilon_i  -   \frac{v}{ u+v }\beta \sum_j a_{ji}(t)\epsilon_j +\beta  \epsilon_i \sum_j a_{ji}(t)\epsilon_j  =0.
\label{eq_order}
\end{equation}

By neglecting the higher-order terms in Eq.~\eqref{eq_order}, we have 
\begin{equation}
\gamma \epsilon_i  -  \frac{v}{ u+v }\beta \sum_j a_{ji}(t)\epsilon_j =0.
\label{eq_noorder}
\end{equation}

Let $e_{ji}$ be the element of the identity matrix $E$,  Eq.~\eqref{eq_noorder} can be expressed as
\begin{equation}
\sum_j^N \left(   \gamma e_{ji}  -   \frac{v}{ u+v }\beta  a_{ji}(t) \right) \epsilon_j =0.
\end{equation}
Furthmore, this equation can be written in matrix form as 
 \begin{equation}
 \left(   \gamma E  -   \frac{v}{ u+v }\beta \mathcal{A}(t) \right) \epsilon  =0 ,
 \label{eq_matrixform}
\end{equation}
where $\epsilon = (\epsilon_1, \epsilon_2, \ldots, \epsilon_N )^T$.
\begin{theorem}	
 Let $\Lambda_{max}(\mathcal{A}(t))$ and $ \Lambda_{max}(\mathcal{A})$  be the largest real eigenvalues of $\mathcal{A}(t)$ and $\mathcal{A}$, respectively, the epidemic  threshold of our proposed model  is given by
\begin{equation}
\beta_c =(1+\frac{u}{v})  \frac{\gamma}{\Lambda_{max}(\mathcal{A}(t)) } \ge (1+\frac{u}{v})  \frac{\gamma}{\Lambda_{max}(\mathcal{A}) }  .
 \label{eq_threshold}
\end{equation}
\end{theorem}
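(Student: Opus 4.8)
The plan is to read the homogeneous linear system \eqref{eq_matrixform} as an eigenvalue problem and then invoke Perron--Frobenius theory. First I would rewrite \eqref{eq_matrixform} as $\mathcal{A}(t)\epsilon = \frac{u+v}{v}\frac{\gamma}{\beta}\,\epsilon$, so that any nontrivial $\epsilon$ must be an eigenvector of the nonnegative, symmetric adjacency matrix $\mathcal{A}(t)$ belonging to the eigenvalue $\lambda = \frac{(u+v)\gamma}{v\beta}$. Since the components $\epsilon_i$ are (rescaled) infection probabilities and hence nonnegative, the admissible eigenvector is the one with nonnegative entries; by the Perron--Frobenius theorem this is possible only when $\lambda$ equals the largest real eigenvalue $\Lambda_{max}(\mathcal{A}(t))$.

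Next I would pin down the threshold itself. For $\beta$ small, $\lambda = \frac{(u+v)\gamma}{v\beta}$ is large and strictly exceeds $\Lambda_{max}(\mathcal{A}(t))$, so only the trivial solution $\epsilon=0$ is admissible and the epidemic vanishes, matching the vanishment equilibrium of Eq.~\eqref{eq_vanish}. As $\beta$ grows, $\lambda$ decreases, and the first nonnegative nontrivial solution appears precisely when $\lambda$ reaches $\Lambda_{max}(\mathcal{A}(t))$. Setting $\frac{(u+v)\gamma}{v\beta_c}=\Lambda_{max}(\mathcal{A}(t))$ and solving for $\beta_c$ gives $\beta_c = (1+\frac{u}{v})\frac{\gamma}{\Lambda_{max}(\mathcal{A}(t))}$, which is the equality asserted in \eqref{eq_threshold}.

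For the inequality I would use the fact that sleep scheduling can only delete edges: whenever a node is inactive, its incident links disappear, so $0 \le a_{ij}(t) \le a_{ij}$ for every $i,j$, i.e.\ $\mathcal{A}(t)$ is dominated entrywise by $\mathcal{A}$. The monotonicity of the spectral radius of nonnegative matrices (again a consequence of Perron--Frobenius) then yields $\Lambda_{max}(\mathcal{A}(t)) \le \Lambda_{max}(\mathcal{A})$. Because both eigenvalues are positive, dividing $\gamma$ by the smaller denominator enlarges the quotient, giving $\beta_c \ge (1+\frac{u}{v})\frac{\gamma}{\Lambda_{max}(\mathcal{A})}$ and completing \eqref{eq_threshold}.

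The hard part will be the threshold-identification step rather than the algebra: one must argue carefully that among all eigenvalues of $\mathcal{A}(t)$ it is exactly the Perron eigenvalue that controls the onset of persistence. This rests on the nonnegativity of $\epsilon$ together with the uniqueness (up to scaling) of the nonnegative eigenvector for an irreducible nonnegative matrix. A clean statement also benefits from noting that $\mathcal{A}(t)$ is symmetric, so its largest real eigenvalue coincides with its spectral radius, which is what legitimizes applying the entrywise monotonicity result in the final step.
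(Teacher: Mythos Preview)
The paper relegates the actual proof to supplementary material, so a line-by-line comparison is not possible, but your argument is the standard and correct continuation of the derivation the paper carries out in Eqs.~(11)--(16): recast \eqref{eq_matrixform} as an eigenvalue problem, invoke Perron--Frobenius to single out $\Lambda_{max}(\mathcal{A}(t))$ via the nonnegativity of $\epsilon$, and then use entrywise monotonicity of the spectral radius of nonnegative matrices for the inequality $\Lambda_{max}(\mathcal{A}(t))\le\Lambda_{max}(\mathcal{A})$. Given how tightly the paper's setup funnels into exactly this line of reasoning, your proposal almost certainly coincides with the intended proof; the only refinement worth adding is the one you already flag, namely that $\mathcal{A}(t)$ need not be irreducible when sleeping nodes disconnect the graph, so the Perron--Frobenius step should be stated for general nonnegative matrices (spectral radius admits a nonnegative eigenvector) rather than for irreducible ones.
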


\begin{proof}
Due to space constraints, we have included the proof in the supplementary material \cite{mysupplement}, as it does not impact the comprehension of our results and conclusions.
\end{proof}


\section{Numerical results}
In this section, we present numerical results that serve to validate the theoretical derivation and discuss the effects of key factors on the epidemic threshold.
We consider a wireless sensor network with a scale-free network topology generated by the Price model. The parameters of the network topology are  number of nodes  $N = 1000$,   average node degree  $\langle k \rangle=4$, and power-law index is $\eta = 3$. At time $t = 0$, ten randomly selected nodes are infected, i.e.,  $\rho_{AI} + \rho_{UI} = 0.01$. These parameter values are kept constant unless stated otherwise.

\begin{figure}[!t]
  \centering
  \includegraphics[width=0.47\textwidth]{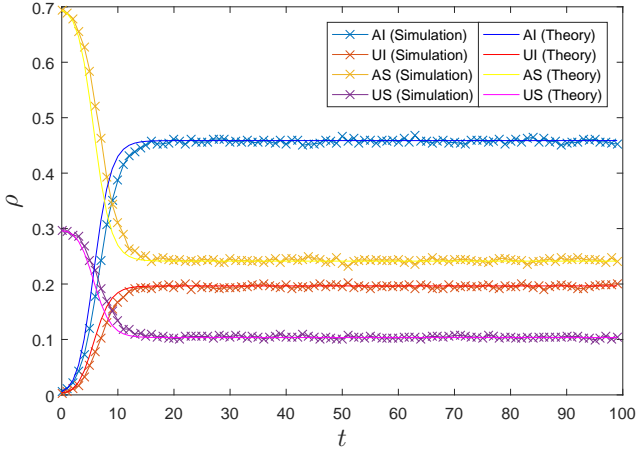}
    \vspace{-1em}
\caption{ (Color online) The temporal evolution of the fractions of nodes of different states. The parameter settings are $\beta=0.5$, $\gamma = 0.3$,  $u=0.3$, and $v = 0.7$. The solid and marked lines are the theoretical and simulation results, respectively. All results are obtained by averaging over 50 independent runs. }
 \label{fig_growth}
   \vspace{-0.7em}
\end{figure}

Figure~\ref{fig_growth} depicts the temporal evolution of the fractions of AS, US, AI, and UI nodes in our model. The results show that, under the given  parameter settings, the AI and UI curves initially experience rapid increases followed by stabilization, while the AS and US curves exhibit an initial decrease before converging. Importantly, the simulation results align well with the theoretical findings obtained using Eqs. \eqref{eq_notby} to \eqref{eq_frac}, thus emphasizing the efficacy of the MMC method in analyzing our model.

\begin{figure}[!t]
\centering 
\subfigure[]{
\includegraphics[width=0.47\textwidth]{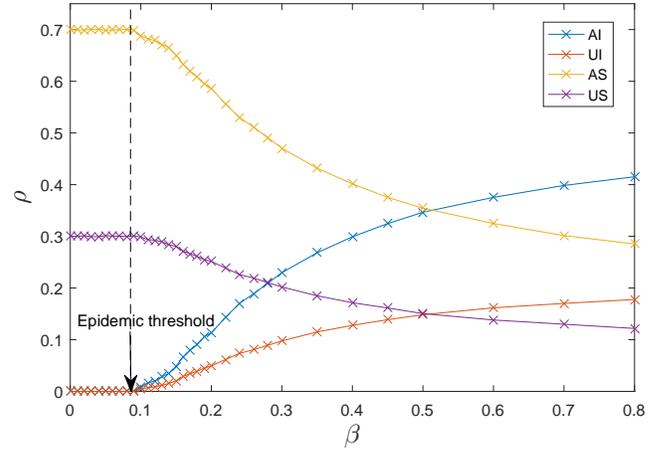}  
} 
\subfigure[]{
\includegraphics[width=0.47\textwidth]{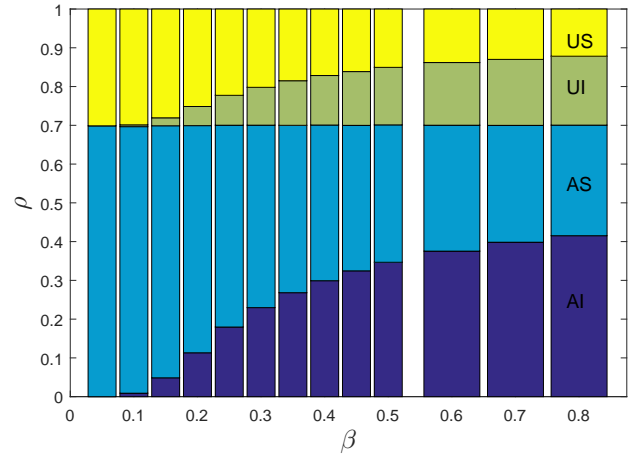}  
} \vspace{-1.2em}
\caption{(Color online) (a): The fraction of nodes of each state $\rho$ vs.  infection rate $\beta$. The black arrow line points to the epidemic threshold $\beta_c$. (b): The stacked bar chart of the fraction of nodes of each state $\rho$ for different values of $\beta$. The parameter settings  are $\gamma = 0.5$, $u = 0.3$ and $v = 0.7$. All the results are obtained by averaging over 50 independent runs.}
\label{fig_threshold}
  \vspace{-1em}
\end{figure}

Next, we present the result of the final fraction of nodes for each state, $\rho$,  as a function of the infection rate $\beta$ in Fig.~\ref{fig_threshold}. From Fig.~\ref{fig_threshold}(a), it is evident that as $\beta$ increases, there is a phase transition of $\rho$ for each of the four states. Moreover, the epidemic threshold $\beta_c$, calculated to be approximately 0.085 using Eq. (17), is further supported by the numerical result.
Furthermore, it can be observed that when $\beta < \beta_c$, only US and AS nodes remain in the network, indicating the eventual vanishing of the epidemic. The fractions of nodes in these two states agree with Eq. \eqref{eq_vanish}. On the other hand, when $\beta > \beta_c$, all states coexist, indicating the persistence of the epidemic. 
In Fig.~\ref{fig_threshold}(b), a stacked bar chart illustrates the relationship between $\rho$ and $\beta$. It is evident that the fractions of A-state nodes and U-state nodes remain constant with the increasing infection rate, \textcolor{black}{which aligns with  Eq. \eqref{eq_persist}.}

  
 \begin{figure}[!t]
  \centering
  \includegraphics[width=0.47\textwidth]{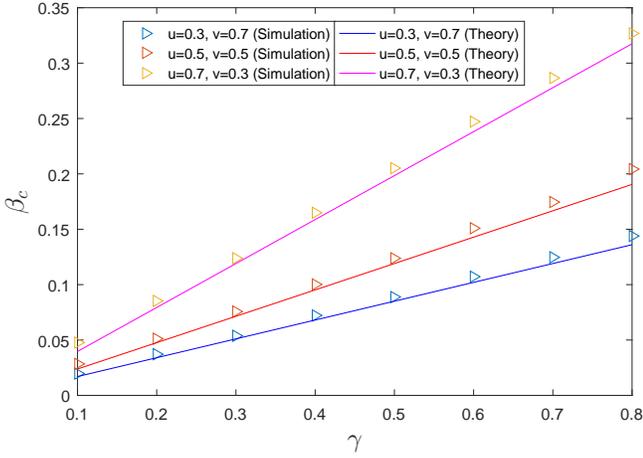}
    \vspace{-1em}
\caption{(Color online) The epidemic threshold $\beta_c$ vs. infection recovery rate $\gamma$ for different values of $u$ and $v$. The lines and  symbols are the theoretical and simulation results, respectively. All the results are obtained by averaging over 50 independent runs.}
 \label{influ_gamma}
   \vspace{-1em}
\end{figure}

 We further investigate the influence of the infection recovery rate $\gamma$ on the epidemic threshold $\beta_c$ for different settings of $u$ and $v$. The results are illustrated in Fig.~\ref{influ_gamma}. The theoretical results are derived using Eq.~\eqref{eq_threshold}.
 It can be observed that all the curves exhibit a linear increasing trend, indicating that $\beta_c$ is proportional to $\gamma$. The simulation results closely match  the theoretical predictions for $\beta_c$. These findings  indicate that a more rapid recovery process can effectively mitigate the epidemic spreading.

 \begin{figure}[!t]
  \centering
  \includegraphics[width=0.47\textwidth]{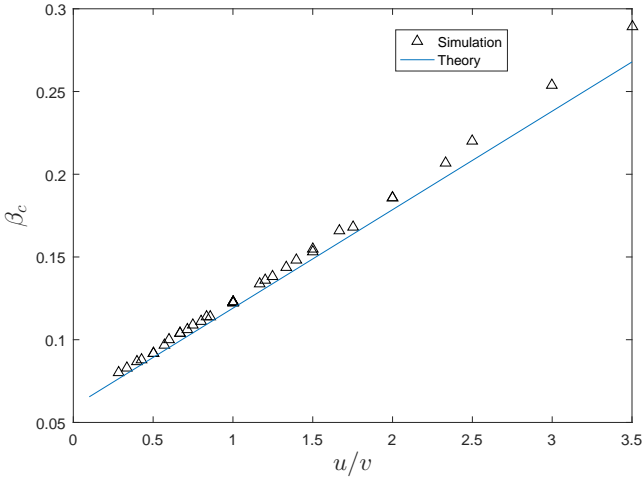}
    \vspace{-0.7em}
\caption{ (Color online) The epidemic threshold $\beta_c$ vs. the ratio   $u/v$ with $u =\{0.2:0.1:0.7\} $ and $v =\{0.2:0.1:0.7\} $. The parameter setting is  $\gamma = 0.5$. The lines and  symbols are the theoretical and simulation results, respectively. All results are obtained by averaging over 50 independent runs. }
 \label{influ_uv}
   \vspace{-0.7em}
\end{figure}

Finally, we discuss the influence of the sleep scheduling parameters $u$ and $v$ on the epidemic threshold $\beta_c$. The results are presented in Fig. \ref{influ_uv}, where the x-axis represents the ratio of $u$ to $v$, denoted as $u/v$. It is clear that $\beta_c$ exhibits a proportional relationship with $u/v$, implying a direct proportionality between $\beta_c$ and $u$, and an inverse proportionality between $\beta_c$ and $v$. Once again, the simulation and theoretical results align well.

\section{Conclusion}
In summary, our work focuses on the issue of epidemic spreading in WSNs. We present a new epidemic spreading model for WSNs that combines the SIS spreading dynamics with node sleep scheduling. Using the MMC analysis method, we derive the dynamic equations that govern the epidemic spreading process in our model. Furthermore, we analyze the conditions for the epidemic to vanish or persist and determine the outbreak threshold for the spreading.
In particular, we discover that the epidemic threshold of our model is directly proportional to the ratio of node sleeping and node activation probabilities as well as the recovery rate. Conversely, it is inversely proportional to the maximum real eigenvalue of the network's adjacency matrix.  These  theoretical findings are strongly supported by the numerical results.  
We believe that our work contributes to the understanding of epidemic spreading in WSNs and offers some insights for preventing virus attacks on these networks.


\ifCLASSOPTIONcaptionsoff
  \newpage
\fi

%
%

\end{document}